\newtheorem{thm}{Theorem}
\newtheorem{lem}{Lemma}
\newtheorem{con}{Conjecture}
\title{The Courtade-Kumar Most Informative Boolean Function Conjecture and a Symmetrized Li-M\'edard Conjecture are Equivalent}
\author{Leighton Pate Barnes and Ayfer \"Ozg\"ur\\
Stanford University, Stanford, CA 94305\\
 Email: \{lpb, aozgur\}@stanford.edu}
\begin{document}

\maketitle

\begin{abstract}
We consider the Courtade-Kumar most informative Boolean function conjecture for balanced functions, as well as a conjecture by Li and M\'edard that dictatorship functions also maximize the $L^\alpha$ norm of $T_pf$ for $1\leq\alpha\leq2$ where $T_p$ is the noise operator and $f$ is a balanced Boolean function. By using a result due to Laguerre from the 1880's, we are able to bound how many times an $L^\alpha$-norm related quantity can cross zero as a function of $\alpha$, and show that these two conjectures are essentially equivalent.
\end{abstract}

\section{Introduction}
In 2014, Courtade and Kumar published \cite{courtade_kumar} in which they introduced a conjecture about which Boolean function maximizes mutual information when applied to a noisy input. More concretely, they considered the following scenario. Suppose $X$ is uniformly distributed on the $n$-dimensional Hamming cube $\{0,1\}^n$. Let $Y$ be a noisy copy of $X$ which is the output of a memoryless binary symmetric channel with crossover probability $0<p<\frac{1}{2}$ when $X$ is the input. The optimization problem of interest is to find which function $f:\{0,1\}^n \to \{0,1\}$ maximizes the mutual information $I(f(Y);X)$ (or equivalently $I(f(X);Y)$).

Courtade and Kumar conjectured that
\begin{equation} \label{ck_conj} I(f(Y);X) \leq 1 - h(p) \end{equation}
where $h(p) = -p\log p - (1-p)\log (1-p)$ is the binary entropy function. This mutual information is achieved when
$$f(Y) = f(Y_1,\ldots,Y_n) = Y_i$$
for any $i=1,\ldots n$. Such functions are called \emph{dictatorship} functions. In other words, Courtade and Kumar conjecture that there is no more informative Boolean function than simply taking one of the coordinate values.

In recent years there has been substantial interest in solving this problem, as well as some partial results. Chandar and Tchamkerten show the bound
\begin{equation} I(f(Y);X) \leq (1-2p)^2 \label{eq:bound1} \end{equation}
in \cite{chandar}. Ordentlich, Shayevitz and Weinstein prove the bound
$$I(f(Y);X) \leq \frac{\log(e)}{2}(1-2p)^2+9\left(1-\frac{\log(e)}{2}\right)(1-2p)^4$$
for $\frac{1}{2}\left(1-\frac{1}{\sqrt{3}}\right) \leq p \leq \frac{1}{2}$ in \cite{or}, which is tighter than \eqref{eq:bound1} for $\frac{1}{3}\leq p \leq \frac{1}{2}$. Using this improved bound, as well as a strengthened version of ``Mrs. Gerber's Lemma'', Samorodnitsky was able to prove the conjecture is true for all $p \geq \frac{1}{2} - \delta$ where $\delta > 0$ is some absolute constant in \cite{alex}.

Along with these partial results, there have also been some related conjectures made in \cite{muriel,hel_conj}. In these works it is conjectured that dictatorship functions maximize other functionals of $f$, in such a way that the conjectures are stronger than the Courtade-Kumar conjecture -- meaning that if the conjectures hold then \eqref{ck_conj} must also hold. Of particular interest to us is the conjecture from Li and M\'edard in \cite{muriel} that focuses on \emph{balanced} Boolean functions $f$ that have fixed mean
$$\frac{1}{2^n}\sum_{y\in\{0,1\}^n}f(y) = \frac{1}{2} \; .$$
They conjecture that among all such balanced functions $f$, the $L^\alpha$-norm of the function
\begin{equation} \label{eq:noise_op} T_pf(x) = \mathbb{P}(f(Y)=1|X=x) \end{equation}
is maximized by dictatorship functions when $1\leq\alpha\leq2$. Li and M\'edard show that if their conjecture is true, then \eqref{ck_conj} must also be true for all balanced functions $f$. Our main contribution in this paper is to show that a slightly modified converse statement also holds. In particular, if we ``symmetrize'' the Li-M\'edard conjecture by including both $T_pf(x)$ and $1-T_pf(x)$ in the calculation of the $L^\alpha$-norm, then the Courtade-Kumar conjecture for balanced functions $f$ also implies this symmetrized conjecture. In this sense, the two conjectures are equivalent.

In order to show this equivalence, we study the quantity
\begin{equation}
\label{eq:alphanorm} N_\alpha(f) = \sum_{x\in\{0,1\}^n} (T_pf(x))^\alpha
\end{equation}
or its symmetrized version
\begin{equation}
\label{eq:alphanorm_sym} N_\alpha^\mathsf{sym}(f) = \sum_{x\in\{0,1\}^n} (T_pf(x))^\alpha + (1-T_pf(x))^\alpha
\end{equation}
for any $\alpha\in\mathbb{R}$. Note that these quantities are not technically norms of $T_pf$ except in the case of \eqref{eq:alphanorm} when $\alpha=1$. However, \eqref{eq:alphanorm} can be thought of as the $\alpha$-power of the $L^\alpha$-norm when $\alpha\geq1$. Letting $f_0$ be a dictatorship function, we define
\begin{equation} \label{eq:g} g_f(\alpha) = N_\alpha(f) - N_\alpha(f_0) \end{equation}
and the corresponding
\begin{equation} \label{eq:g_sym} g_f^\mathsf{sym}(\alpha) = N_\alpha^\mathsf{sym}(f) - N_\alpha^\mathsf{sym}(f_0) \; .\end{equation}
By using a result due to Laguerre from the 1880's \cite{laguerre}, we will show that both $g_f(\alpha)$ and $g_f^\mathsf{sym}(\alpha)$ can have at most four zeros (unless, of course, they are identically zero). This allows us to relate the validity of the Courtade-Kumar conjecture, which can be thought of as a local property of $g_f^\mathsf{sym}(\alpha)$ around $\alpha=1$, to the global properties of $g_f^\mathsf{sym}(\alpha)$ such as where the zeros are located. We believe this is a powerful insight that may shed further light on the validity of these sorts of conjectures.

\section{Main Results}
In this section we state precisely the relevant conjectures and how they are equivalent. In order to make sense of the first conjecture below, we first need to analyze the mutual information quantity of interest $I(f(Y);X)$. For balanced $f$ we can write this quantity as follows:
\begin{align} \label{eq:rewrite}
I(f(Y);X) & = H(f(Y)) - H(f(Y)|X) \nonumber \\
& = 1 - \frac{1}{2^n}\sum_{x\in\{0,1\}^n}h\left(\mathbb{P}(f(Y)=1|X=x)\right) \nonumber \\
& = 1 + \frac{1}{2^n}\sum_{x\in\{0,1\}^n} \bigg( T_pf(x)\log T_pf(x) \nonumber \\
& \quad \quad + (1-T_pf(x))\log(1-T_pf(x)) \bigg) \; .
\end{align}
Looking at \eqref{eq:rewrite}, it makes sense to talk about an ``unsymmetrized'' Courtade-Kumar conjecture where we consider which balanced Boolean function maximizes only the part $\sum_{x\in\{0,1\}^n}  T_pf(x)\log T_pf(x) \; .$ Recall that we say a function $f:\{0,1\}^n \to \{0,1\}$ is balanced if $\mathbb{E}[f(Y)] = \frac{1}{2}$ and that $f_0$ represents a dictatorship function.
\begin{con}[unsymmetrized Courtade-Kumar]\label{con1}
For any balanced $f$,
$$\sum_{x\in\{0,1\}^n}  T_pf(x)\log T_pf(x) \leq \sum_{x\in\{0,1\}^n}  T_pf_0(x)\log T_pf_0(x) \; .$$
\end{con}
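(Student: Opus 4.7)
The quantity in Conjecture~\ref{con1} equals $\frac{d}{d\alpha}N_\alpha(f)\big|_{\alpha=1}$, and likewise for $f_0$, so the statement is equivalent to $g_f'(1)\le 0$ with $g_f$ as in \eqref{eq:g}. For any balanced Boolean function
$$N_1(f) = \sum_x T_pf(x) = 2^n\,\mathbb{E}[f(Y)] = 2^{n-1} = N_1(f_0),$$
so $g_f(1)=0$; and since $(T_pf(x))^0=1$ for every $x$, also $N_0(f)=2^n=N_0(f_0)$ and hence $g_f(0)=0$. Thus $\alpha=0$ and $\alpha=1$ already account for two of the at most four zeros of $g_f$ furnished by the Laguerre-based bound advertised in the introduction.

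\textbf{Plan.} To prove $g_f'(1)\le 0$ I would attempt to show that $g_f(\alpha)\le 0$ on some right-neighborhood $(1,1+\varepsilon)$ of $\alpha=1$; together with $g_f(1)=0$ this yields the desired derivative inequality. The four-zero rigidity turns this local question into a nearly global one: only two zeros are available in $\mathbb{R}\setminus\{0,1\}$, so the sign pattern of $g_f$ is determined by a small amount of additional data, namely (i) the asymptotic $g_f(\alpha)$ as $\alpha\to\infty$, which is governed by $\max_x T_pf(x)$ with multiplicity and which equals $(1-p)^\alpha\cdot 2^{n-1}$ for the dictatorship, and (ii) the sign of $g_f$ at one interior anchor such as $\alpha=2$. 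With these two ingredients in hand, the four-zero constraint should pin down $g_f\le 0$ on a right-neighborhood of $1$, closing the argument.

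\textbf{Main obstacle.} Producing a usable anchor is where I expect the bulk of the work to lie. The most natural choice, $g_f(2)\le 0$, is exactly the statement that dictatorships maximize noise stability among balanced Boolean functions --- the $\alpha=2$ endpoint of the Li--M\'edard conjecture, which is itself open. A self-contained alternative would be a direct series expansion, writing $T_pf(x)=\tfrac12+u(x)$ with $\sum_x u(x)=0$ and comparing $(\tfrac12+u)^\alpha$ termwise against the dictatorship, where the leading nontrivial term is a multiple of $\sum_x u(x)^2$, i.e.\ again noise stability. Either route requires a genuine quantitative comparison between $T_pf$ and $T_pf_0$ at some $\alpha\neq 1$; the Laguerre machinery of the paper amplifies such a comparison into $g_f'(1)\le 0$, but it cannot manufacture the comparison itself, and this is why I expect Conjecture~\ref{con1} to remain genuinely hard.
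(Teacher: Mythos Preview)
The statement you are trying to prove is a \emph{conjecture} in the paper; the paper does not prove it, so there is no proof to compare against. What the paper does prove is Theorem~\ref{thm1}, the \emph{equivalence} of Conjecture~\ref{con1} with Conjecture~\ref{con3}, using exactly the Laguerre zero-counting machinery you describe.

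Two specific comments on your proposal. First, your ``main obstacle'' is not an obstacle: the anchor $g_f(2)\le 0$ (with strict inequality for non-dictatorships) is \emph{known}, not open. It is Lemma~\ref{lem1} of the paper, proved by Parseval: $N_2(f)=\sum_v (1-2p)^{2|v|}\hat f(v)^2$, and among balanced Boolean functions only dictatorships concentrate all non-DC Fourier mass at weight one. So the $\alpha=2$ endpoint of Li--M\'edard is the easy part, not the hard part.

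Second, and more importantly, even with this anchor your plan cannot close. Take the generic case with four sign changes, so that $g_f(\alpha)\to+\infty$ as $\alpha\to\pm\infty$. The configuration
\[
g_f>0 \text{ on }(-\infty,0),\quad g_f<0 \text{ on }(0,1),\quad g_f>0 \text{ on }(1,z_1),\quad g_f<0 \text{ on }(z_1,z_2),\quad g_f>0 \text{ on }(z_2,\infty)
\]
with $1<z_1<2<z_2$ uses exactly four simple zeros, satisfies $g_f(0)=g_f(1)=0$, $g_f(2)<0$, and the correct asymptotics --- yet has $g_f'(1)>0$. (The three-sign-change case admits an analogous configuration with the zero in $(2,\infty)$ removed.) So the Laguerre bound together with the $\alpha=2$ anchor and the asymptotics does \emph{not} force $g_f\le 0$ to the right of $1$; it is fully consistent with a counterexample to Conjecture~\ref{con1}. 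What the argument \emph{does} give, run as a contrapositive, is precisely Theorem~\ref{thm1}: if $g_f(\alpha)>0$ for some $\alpha\in(1,2)$ then the extra zero forced in $(1,2)$ exhausts the budget and pins $g_f>0$ immediately to the right of $1$, whence $g_f'(1)>0$. Your intuition that the conjecture remains hard is correct, but the reason is not a missing anchor; it is that the zero-counting rigidity is exactly strong enough for equivalence and no stronger.
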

\begin{con}[Courtade-Kumar]\label{con2}
For any balanced $f$,
$$I(f(Y);X) \leq 1-h(p) \; .$$
\end{con}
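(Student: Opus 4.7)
The plan is to combine the paper's Laguerre-based zero-count machinery with a Fourier anchor at $\alpha = 2$ and endpoint analysis of $g_f^{\mathsf{sym}}$. First, I would rewrite Conjecture~\ref{con2} as a purely local statement about $g_f^{\mathsf{sym}}$ near $\alpha = 1$. From \eqref{eq:rewrite} and the identity $\tfrac{d}{d\alpha}(x^\alpha + (1-x)^\alpha)\big|_{\alpha=1} = x\log x + (1-x)\log(1-x)$, one sees that
$$I(f(Y);X) - (1-h(p)) = \frac{1}{2^n}\,\frac{d}{d\alpha} g_f^{\mathsf{sym}}(\alpha)\bigg|_{\alpha=1},$$
so Conjecture~\ref{con2} is equivalent to $(g_f^{\mathsf{sym}})'(1) \leq 0$. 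Moreover $g_f^{\mathsf{sym}}(1) = 0$ follows from balancedness together with $T_p f(x) + (1 - T_p f(x)) = 1$, so $\alpha = 1$ is itself a zero of $g_f^{\mathsf{sym}}$.

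Second, I would collect enough anchor information about $g_f^{\mathsf{sym}}$ at specific values of $\alpha$ to constrain its shape on $[1,2]$. At $\alpha = 2$, Parseval gives
$$N_2^{\mathsf{sym}}(f) = 2\sum_{x}(T_p f(x))^2 = 2^{n+1}\sum_{S\subseteq[n]}(1-2p)^{2|S|}\hat f(S)^2.$$
For balanced Boolean $f$, $\hat f(\emptyset) = \tfrac{1}{2}$ and $\sum_{|S|\geq 1}\hat f(S)^2 = \tfrac{1}{4}$, so this sum is bounded by $\tfrac{1}{4} + (1-2p)^2/4$ with equality iff all non-constant Fourier mass sits at level one, i.e.\ iff $f$ is a dictatorship. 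Hence $g_f^{\mathsf{sym}}(2) \leq 0$. I would also observe $g_f^{\mathsf{sym}}(0) = 0$ (every term collapses to $1$), and that $g_f^{\mathsf{sym}}(\alpha) \to +\infty$ as $\alpha\to\pm\infty$ for non-dictator $f$, since non-dictators can drive $T_p f(x)$ strictly closer to $0$ or $1$ than $p$ or $1-p$ respectively.

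Third, I would combine the Laguerre bound (at most four zeros of $g_f^{\mathsf{sym}}$ on $(0,\infty)$ unless identically zero) with these anchors to force $(g_f^{\mathsf{sym}})'(1) \leq 0$. The pigeonholing is delicate: if $(g_f^{\mathsf{sym}})'(1) > 0$, then $g_f^{\mathsf{sym}}$ becomes positive immediately right of $\alpha = 1$, must return to a value $\leq 0$ at $\alpha = 2$, and then must climb to $+\infty$, producing zeros at $0,\,1,\,\alpha_1\in(1,2],\,\alpha_2 > 2$. Ruling out this "bad" pattern requires extra leverage: either counting zeros with multiplicity (a would-be tangency at $\alpha = 1$ is a double zero and tightens the budget to three simple ones), or producing a second anchor $g_f^{\mathsf{sym}}(\alpha_*) \leq 0$ for some $\alpha_* > 2$ (for instance by applying a Laguerre-style argument to an auxiliary function obtained by reflecting through $\alpha = 1$), or combining the high-noise result of Samorodnitsky with an analytic-in-$p$ continuation argument.

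The main obstacle is exactly this last step. The $\alpha = 2$ Fourier inequality, the endpoint analysis, and the Laguerre zero count together almost pin down the sign of $g_f^{\mathsf{sym}}$ near $\alpha = 1$, but a single residual scenario survives, and closing it requires genuinely new information that the paper's framework does not by itself supply. I expect any successful resolution to hinge on extracting a quantitative, $p$-uniform refinement of the Fourier anchor — effectively a strengthened level-one inequality for noisy balanced Boolean functions — which is the ingredient that the existing partial progress (Chandar–Tchamkerten, Ordentlich–Shayevitz–Weinstein, Samorodnitsky) has so far been unable to supply in the full range of $p$, and which is the reason the Courtade–Kumar conjecture remains open.
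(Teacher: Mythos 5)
You were asked to prove Conjecture~\ref{con2}, but this is the (still open) Courtade--Kumar conjecture for balanced functions: the paper itself offers no proof of it, only the equivalence with Conjecture~\ref{con4} stated as Theorem~\ref{thm2}. Your proposal correctly recognizes this, and your analysis is an accurate reconstruction of the paper's machinery. The identity
$$I(f(Y);X) - (1-h(p)) = \frac{1}{2^n}\,\frac{d}{d\alpha} g_f^{\mathsf{sym}}(\alpha)\bigg|_{\alpha=1}$$
is exactly the content of \eqref{eq:rewrite} and \eqref{eq:deriv_sym} (note that for the dictator $T_pf_0(x)\in\{p,1-p\}$, so $(N^{\mathsf{sym}}_\alpha(f_0))'(1)=-2^n h(p)$); your anchors $g_f^{\mathsf{sym}}(0)=g_f^{\mathsf{sym}}(1)=0$, $g_f^{\mathsf{sym}}(2)\le 0$ (Lemma~\ref{lem1}), and $g_f^{\mathsf{sym}}(\alpha)\to+\infty$ as $\alpha\to\pm\infty$ are the same ones the paper uses in Section~\ref{sec3A}; and the ``bad pattern'' you describe --- simple zeros at $0$, $1$, some $\alpha_1\in(1,2)$ and some $\alpha_2>2$ --- is precisely the configuration of Figure~\ref{fig1}. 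That configuration uses exactly the four-zero budget allowed by Lemma~\ref{lem2}, so the Laguerre count cannot exclude it; it can only convert its occurrence into a violation of Conjecture~\ref{con2}, which is how the paper obtains the equivalence rather than a proof. Your diagnosis that closing the remaining scenario requires genuinely new input beyond this framework is correct and is the honest conclusion here.

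Two small technical points in your writeup. First, $N_2^{\mathsf{sym}}(f)=2N_2(f)$ is not an identity for arbitrary $f$: expanding $(1-T_pf(x))^2$ produces a cross term $\sum_x(1-2T_pf(x))$, which vanishes only because $N_1(f)=2^{n-1}$ for balanced $f$, so balancedness must be invoked there. Second, the claim that $g_f^{\mathsf{sym}}(\alpha)\to+\infty$ at $\pm\infty$ for every non-dictator $f$ needs the paper's preliminary reduction: if $T_pf(x)\in[p,1-p]$ for all $x$, the desired inequality $(T_pf(x))^\alpha+(1-T_pf(x))^\alpha\le p^\alpha+(1-p)^\alpha$ holds termwise for all $\alpha\ge 1$ and that case is disposed of separately, after which one may assume some $T_pf(x_0)\notin[p,1-p]$ and the divergence follows. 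Neither point affects your overall (correct) assessment that the statement remains unproven.
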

\begin{con}[Li-M\'edard]\label{con3}
For any balanced $f$,
$$N_\alpha(f)\leq N_\alpha(f_0)$$
for $1\leq\alpha \leq 2$.
\end{con}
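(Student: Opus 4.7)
The plan is to translate Conjecture~3 into a statement about the function $g_f(\alpha) = N_\alpha(f) - N_\alpha(f_0)$ on the interval $[1,2]$, and to attack it using the Laguerre-style zero-counting bound advertised in the introduction. Two zeros of $g_f$ come for free. At $\alpha = 0$ both $N_0(f)$ and $N_0(f_0)$ equal $2^n$, and at $\alpha = 1$ both equal $2^{n-1}$, because $\sum_x T_p f(x) = 2^n\,\mathbb{E}[f(Y)] = 2^{n-1}$ whenever $f$ is balanced. Hence $g_f(0) = g_f(1) = 0$, and the entire problem reduces to controlling the sign of $g_f$ on $(1,2]$.

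Next I would invoke the Laguerre bound: $g_f(\alpha) = \sum_x T_p f(x)^\alpha - 2^{n-1}\bigl(p^\alpha + (1-p)^\alpha\bigr)$ is a signed sum of exponentials in $\alpha$, so it has at most four real zeros (counted with multiplicity) unless it vanishes identically. The zeros at $\alpha=0$ and $\alpha=1$ consume two of these, leaving at most two further sign changes of $g_f$ anywhere on $\mathbb{R}$.

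Then I would bring in Conjecture~1 through the derivative identity
\begin{equation*}
g_f'(1) = \sum_x T_p f(x)\log T_p f(x) - \sum_x T_p f_0(x)\log T_p f_0(x),
\end{equation*}
so that, conditional on Conjecture~1, $g_f'(1)\leq 0$. Together with $g_f(1)=0$ this forces $g_f(\alpha)\leq 0$ on some right neighbourhood of $\alpha=1$. The remaining job is to upgrade this local fact to a bound on all of $[1,2]$. For that I would read off the asymptotic behaviour of $\log N_\alpha(f)$ as $\alpha\to\pm\infty$ from $\max_x T_p f(x)$ and $\min_x T_p f(x)$ and compare with the dictatorship values $1-p$ and $p$, then argue that the two ``spare'' zeros permitted by Laguerre must be absorbed outside $(1,2)$.

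The hard step is exactly this localization. Laguerre's bound is purely about the \emph{number} of zeros of $g_f$, not where they sit, and nothing in the setup immediately forbids one of the two spare zeros from drifting into $(1,2)$. Most likely one needs a case split on whether $\max_x T_p f(x)$ exceeds $1-p$ (and similarly for the minimum), on whether the zero at $\alpha=1$ is simple or has higher multiplicity, and possibly an appeal to hypercontractive estimates for $N_2(f)$ to pin down the sign of $g_f$ at the endpoint $\alpha=2$. This is also why I would expect a direct proof of the unsymmetrized Conjecture~3 to be harder than proving the paper's symmetrized variant, where the substitution $T_pf \mapsto 1-T_pf$ produces additional automatic zeros of $g_f^{\mathsf{sym}}$ that further restrict where the remaining sign changes can hide.
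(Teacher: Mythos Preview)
First, a framing point: Conjecture~3 is an open conjecture, and the paper does not prove it outright. What the paper establishes is Theorem~1, the equivalence of Conjectures~1 and~3. Your proposal is, in effect, an outline of the implication Conjecture~1 $\Rightarrow$ Conjecture~3, and that is the correct thing to compare against.

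Your overall architecture matches the paper's: write $g_f(\alpha)=N_\alpha(f)-N_\alpha(f_0)$ as a signed exponential sum, note the free zeros at $\alpha=0,1$, apply Laguerre's sign-change bound, and link $g_f'(1)$ to Conjecture~1. The gap you flag---localizing the remaining zeros away from $(1,2)$---is exactly the nontrivial step, and the paper closes it with two ingredients you mention only tentatively.

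The first is Lemma~1: $N_2(f)<N_2(f_0)$ strictly for every balanced non-dictatorship $f$, via Parseval on $T_pf$. You list this as ``possibly'' needed; in the paper it is essential, since it pins down $g_f(2)<0$. The second is the contrapositive framing. Rather than starting from $g_f'(1)\le 0$ and pushing forward, the paper assumes a violation $g_f(\alpha^*)>0$ for some $\alpha^*\in(1,2]$, uses $g_f(2)<0$ and continuity to force a zero strictly inside $(1,2)$, and then counts: with zeros already at $0$ and $1$ and at most three or four total (after the sign-pattern case split you anticipate on whether some $T_pf(x)$ falls outside $[p,1-p]$), the curve is forced to be positive on a right neighbourhood of $\alpha=1$. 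That gives $g_f'(1)\ge 0$, and equality is ruled out because it would make $\alpha=1$ a double zero and overrun the Laguerre budget. Hence $g_f'(1)>0$, contradicting Conjecture~1. Your direct route would work too, but the contrapositive dispatches the ``where do the spare zeros go'' question cleanly.

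One correction to your final paragraph: symmetrization does \emph{not} produce additional automatic zeros; both $g_f$ and $g_f^{\mathsf{sym}}$ vanish for free only at $\alpha=0,1$. What symmetrization guarantees is that the coefficient sign pattern is always $+\,-\,+\,-\,+$ (four changes), whereas in the unsymmetrized case one may have only three. The paper treats the three-change case separately, and it is if anything easier, not harder, since there is one fewer zero to locate.
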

\begin{con}[symmetrized Li-M\'edard]\label{con4}
For any balanced $f$,
$$N^\mathsf{sym}_\alpha(f)\leq N^\mathsf{sym}_\alpha(f_0)$$
for $1\leq\alpha \leq 2$.
\end{con}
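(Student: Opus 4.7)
Since Conjecture~4 is an open conjecture in itself, the natural plan---following the machinery the paper advertises---is to deduce it from the (also open, but extensively studied) Courtade--Kumar Conjecture~2, thereby establishing the equivalence promised in the abstract. The reverse implication (Conjecture~4 $\Rightarrow$ Conjecture~2 for balanced $f$) essentially follows by differentiating the $L^\alpha$-bound at $\alpha = 1$ in the spirit of \cite{muriel}, so the substantive work is to assume Conjecture~2 and produce Conjecture~4.

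The plan is organized around the exponential polynomial $g_f^{\mathsf{sym}}$. First, group the $2^n$ inputs by the distinct values $v \in (0,1)$ taken by $T_pf(x)$ and collect like terms: since both $v$ and $1-v$ contribute to $N_\alpha^{\mathsf{sym}}$, we obtain $g_f^{\mathsf{sym}}(\alpha) = \sum_{i=1}^K c_i v_i^\alpha$ with a palindromic coefficient sequence (reflecting the symmetry $v \leftrightarrow 1-v$). Second, apply Laguerre's Descartes-type rule for exponential polynomials to conclude that $g_f^{\mathsf{sym}}$ has at most four real zeros (counted with multiplicity), unless identically zero. Third, identify two automatic zeros: $g_f^{\mathsf{sym}}(1) = 0$ from $v + (1-v) = 1$, and $g_f^{\mathsf{sym}}(0) = 0$ from $v^0 = 1$ (valid because $p \in (0, 1/2)$ forces $T_pf(x) \in (0,1)$). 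Fourth, differentiate term by term: $\tfrac{d}{d\alpha} g_f^{\mathsf{sym}}(\alpha)\big|_{\alpha = 1}$ is a positive multiple of $2^n h(p) - \sum_x h(T_pf(x))$, which by \eqref{eq:rewrite} equals $2^n[I(f(Y); X) - (1-h(p))]$ and is $\leq 0$ exactly when Conjecture~2 holds.

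To finish, one argues that $g_f^{\mathsf{sym}}(\alpha) \leq 0$ on $[1, 2]$ by a global sign-chart analysis: the automatic zeros at $\{0, 1\}$ consume part of the budget of four, the non-positive slope at $\alpha = 1$ forces $g_f^{\mathsf{sym}} \leq 0$ in a right-neighborhood of $\alpha = 1$, and the asymptotic behavior as $\alpha \to \pm\infty$ is determined by comparing the extremes of $\{T_pf(x), 1-T_pf(x)\}$ to $\{p, 1-p\}$. The main obstacle lies in this final step: the four-zero bound alone does not immediately rule out the possibility of two additional zeros placed in $(1, 2]$ so that $g_f^{\mathsf{sym}}$ becomes positive on some intermediate subinterval while still respecting the slope constraint and the sign at infinity. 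Excluding such pathological sign patterns will likely require extra leverage---perhaps a finer analysis of the multiplicities at $\alpha \in \{0, 1\}$ together with the balance constraint $\sum_x T_pf(x) = 2^{n-1}$---to tightly constrain the placement of the remaining zeros and complete the case analysis.
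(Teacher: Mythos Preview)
Your overall architecture matches the paper's: reduce to the implication Conjecture~\ref{con2} $\Rightarrow$ Conjecture~\ref{con4}, write $g_f^{\mathsf{sym}}$ as a generalized Dirichlet polynomial, invoke Laguerre to bound the zero count, identify the automatic zeros at $\alpha=0,1$, and link $(g_f^{\mathsf{sym}})'(1)$ to $I(f(Y);X)-(1-h(p))$ via \eqref{eq:rewrite}. The gap you flag in your last paragraph is real, but the remedy you propose (finer multiplicity analysis at $\alpha\in\{0,1\}$ together with the balance constraint) is not what closes it.

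The missing ingredient is the case $\alpha=2$: a Fourier/Parseval computation (Lemma~\ref{lem1} in the paper) gives $N_2^{\mathsf{sym}}(f)<N_2^{\mathsf{sym}}(f_0)$ strictly for every balanced non-dictatorship $f$, i.e.\ $g_f^{\mathsf{sym}}(2)<0$. After reducing WLOG to the case where some $T_pf(x)\notin[p,1-p]$ (otherwise the inequality holds pointwise for every $\alpha\ge 1$ and there is nothing to prove), the asymptotic $g_f^{\mathsf{sym}}(\alpha)\to+\infty$ that you already note forces a zero in $(2,\infty)$. That is a \emph{third} zero; only one more is available from Laguerre, so your feared scenario of two additional zeros in $(1,2]$ is impossible. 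The paper then finishes by contrapositive: if $g_f^{\mathsf{sym}}(\alpha^\ast)>0$ for some $\alpha^\ast\in(1,2)$, the fourth and last zero must lie in $(\alpha^\ast,2)$, so $g_f^{\mathsf{sym}}>0$ on all of $(1,\alpha^\ast]$, so $(g_f^{\mathsf{sym}})'(1)\ge 0$; equality is ruled out because a double zero at $\alpha=1$ would bring the total to five. (Lemma~\ref{lem1} is also what pins the sign pattern of the coefficients to $+\,-\,+\,-\,+$, yielding the four-sign-change count you asserted but did not justify: it excludes the possibility that \emph{every} $T_pf(x)$ lies outside $[p,1-p]$.)
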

We establish equivalences between these conjectures in our main results below.
\begin{thm} \label{thm1}
Conjecture \ref{con1} is true if and only if Conjecture \ref{con3} is true.
\end{thm}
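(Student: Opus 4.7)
The plan is to work directly with $g_f(\alpha) = N_\alpha(f) - N_\alpha(f_0)$: Conjecture 3 is the statement $g_f(\alpha) \leq 0$ on $[1,2]$, while Conjecture 1 is equivalent to $g_f'(1) \leq 0$ via term-by-term differentiation at $\alpha = 1$. Three pinning facts will underpin both directions: (a) $g_f(0) = 0$, because $T_p f(x) > 0$ for every $x$ when $f$ is not identically zero, giving $N_0(f) = 2^n = N_0(f_0)$; (b) $g_f(1) = 0$, the balanced condition; and (c) $g_f(2) \leq 0$ with strict inequality for non-dictatorship $f$, by a Parseval computation using $\hat f(\emptyset)^2 = 1/4$ and $(1-2p)^{2|S|} \leq (1-2p)^2$ for $|S|\geq 1$. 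The other crucial input is the paper's Laguerre-based result that $g_f$ has at most four real zeros counting multiplicity. Writing $g_f(\alpha) = \sum_v c_v v^\alpha$ and partitioning values into the five groups $v<p$, $v=p$, $p<v<1-p$, $v=1-p$, $v>1-p$ yields the alternating sign pattern $+,-,+,-,+$ for $(c_v)$; the bound of four is attainable only when all five groups are present, which in particular forces $\max_x T_p f(x) > 1-p$.

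The easy direction (Conjecture 3 $\Rightarrow$ Conjecture 1) is immediate: since $g_f(1)=0$ and $g_f\leq 0$ on $[1,2]$, the right derivative at $\alpha=1$ is non-positive, and by smoothness of $g_f$ this equals $g_f'(1)$.

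For the hard direction (Conjecture 1 $\Rightarrow$ Conjecture 3) I argue by contrapositive: assume Conjecture 3 fails for some balanced $f$, and show Conjecture 1 fails for the same $f$. Such an $f$ cannot be a dictatorship, so $g_f(2)<0$ strictly, and the failure produces $\alpha^*\in(1,2)$ with $g_f(\alpha^*)>0$; an IVT application on $(\alpha^*,2)$ yields a sign-changing zero $\beta_2$. Suppose toward contradiction that $g_f'(1)\leq 0$. If $g_f'(1)<0$, then $g_f$ is negative just past $1$, so a second IVT application on $(1,\alpha^*)$ produces another sign-changing zero $\beta_1$; the four simple zeros $0, 1, \beta_1, \beta_2$ saturate Laguerre's bound, which by the sign-pattern analysis forces $\max_x T_pf(x) > 1-p$ and hence $g_f(\alpha)\to+\infty$ as $\alpha\to\infty$. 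Since $g_f$ is negative immediately past $\beta_2$, it must cross zero once more in $(\beta_2,\infty)$, giving a forbidden fifth zero. The remaining sub-case $g_f'(1)=0$ is handled by a multiplicity count: analyticity of $g_f$ together with $g_f(2)\neq 0$ rules out $g_f\equiv 0$ near $1$, so the multiplicity $m$ at $1$ is finite, and the constraint $1+m+1\leq 4$ coming from the zeros $0$, $1$, $\beta_2$ forces $m=2$; if $g_f<0$ just past $1$ then an extra IVT zero in $(1,\alpha^*)$ pushes the total multiplicity to at least $5$, while if $g_f>0$ just past $1$ the same escape-to-$+\infty$ argument furnishes the forbidden fifth zero.

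The main obstacle is the $g_f'(1)<0$ sub-case, specifically the step of upgrading saturation of Laguerre's bound at four zeros to the structural conclusion $\max_x T_pf(x) > 1-p$ through the coefficient sign pattern, and then using the resulting behavior at $+\infty$ to manufacture the forbidden fifth zero. The remaining multiplicity bookkeeping in the $g_f'(1)=0$ sub-case is routine once this framework is in place.
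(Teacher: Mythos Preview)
Your proposal is correct and uses the same core ingredients as the paper (Laguerre's sign-change bound, the trivial zeros at $\alpha=0,1$, and the strict inequality $g_f(2)<0$ from the Parseval computation), but the logical organization is genuinely different.

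The paper first pins down the number of sign changes in the coefficient sequence: it distinguishes the case of four sign changes (handled exactly as in the symmetrized proof, where one first locates a zero in $(2,\infty)$ from the behavior at $+\infty$, so that a violation of Conjecture~3 forces the fourth and final zero into $(1,2)$ and hence $g_f>0$ on $(1,1+\epsilon)$) from the case of only three sign changes (where the zero in $(1,2)$ coming from the violation is already the last one). In either branch one reads off $g_f'(1)>0$ directly.

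You instead run a contradiction argument keyed to the sign of $g_f'(1)$. In the sub-case $g_f'(1)<0$ you manufacture two extra zeros $\beta_1,\beta_2$ in $(1,2)$ via the intermediate value theorem, observe that the resulting four zeros force the Laguerre bound to be saturated, and then \emph{deduce} from this saturation that all five sign groups must be present --- in particular that $\max_x T_pf(x)>1-p$ --- which in turn yields $g_f\to+\infty$ and a forbidden fifth zero. This ``saturation $\Rightarrow$ structural conclusion'' step is the main point of departure: the paper assumes the structural information (which groups are present) at the outset and branches on it, whereas you recover it as a consequence of having too many zeros. Your route avoids the explicit $3$-vs-$4$ case split and yields a uniform contradiction, at the cost of the extra deduction that four zeros force the fifth sign group. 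Both are clean; the paper's version is slightly more direct once the case split is made, while yours packages the argument without needing to identify the sign-change count in advance.

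One small point worth tightening: your statement that the pattern is $+,-,+,-,+$ tacitly assumes the net coefficients at $v=p$ and $v=1-p$ are strictly negative. If either fails (e.g.\ many $x$ with $T_pf(x)=p$), the sign-change count drops to at most two, Laguerre then forces $g_f$ to have no zeros beyond $\alpha=0,1$, and combined with $g_f(2)<0$ this gives $g_f<0$ on $(1,\infty)$ outright --- so Conjecture~3 holds for such $f$ and the contrapositive hypothesis is never triggered. This is harmless, but should be said.
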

\begin{thm} \label{thm2}
Conjecture \ref{con2} is true if and only if Conjecture \ref{con4} is true.
\end{thm}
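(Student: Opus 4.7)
The plan is to recast both conjectures as statements about the exponential sum $g_f^{\mathsf{sym}}(\alpha)$ defined in \eqref{eq:g_sym} and exploit the four-zero Laguerre bound announced in the introduction. Combining \eqref{eq:rewrite} with the dictatorship identity $I(f_0(Y);X)=1-h(p)$ gives
\[ (g_f^{\mathsf{sym}})'(1) \;=\; 2^n\bigl(I(f(Y);X)-(1-h(p))\bigr), \]
so Conjecture~\ref{con2} is equivalent to the local condition $(g_f^{\mathsf{sym}})'(1)\leq 0$, while Conjecture~\ref{con4} is the global condition $g_f^{\mathsf{sym}}(\alpha)\leq 0$ on $[1,2]$. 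The $(\Leftarrow)$ direction is immediate: $g_f^{\mathsf{sym}}(1)=0$ together with $g_f^{\mathsf{sym}}\leq 0$ on $[1,2]$ forces the right derivative at $\alpha=1$ to be non-positive.

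For $(\Rightarrow)$ I would first collect three ingredients. From $\sum_x T_pf(x)=2^{n-1}$ for balanced $f$ and the identity $(T_pf(x))^0+(1-T_pf(x))^0=2$, both $N_0^{\mathsf{sym}}$ and $N_1^{\mathsf{sym}}$ are $f$-independent, so $g_f^{\mathsf{sym}}(0)=g_f^{\mathsf{sym}}(1)=0$. A quick Parseval computation on $\tilde f=2f-1$ (whose Fourier coefficients satisfy $\hat{\tilde f}(\emptyset)=0$ and $\sum_S \hat{\tilde f}(S)^2=1$, with $T_p\chi_S=(1-2p)^{|S|}\chi_S$) shows $\|T_p\tilde f\|_2^2\leq(1-2p)^2=\|T_p\tilde f_0\|_2^2$, which rearranges to the unconditional bound $g_f^{\mathsf{sym}}(2)\leq 0$. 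Finally, grouping the multiset $\{T_pf(x),\,1-T_pf(x):x\}\cup\{p,1-p\}$ by value writes $g_f^{\mathsf{sym}}(\alpha)=\sum_k c_k v_k^\alpha$, where $c_k\geq 0$ for $v_k\notin\{p,1-p\}$ while $c_p=c_{1-p}\leq 0$ (with equality only when $g_f^{\mathsf{sym}}\equiv 0$).

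I would then split into two cases according to whether $T_pf(x)\in[p,1-p]$ for every $x$. In the \emph{contained} case, all bases lie in $[p,1-p]$ and the only negative coefficients sit at the two extremes, giving sign pattern $-,+,\dots,+,-$ and at most two sign changes. Laguerre then forces $g_f^{\mathsf{sym}}$ to have only the two known zeros $0$ and $1$, and the leading term $c_{1-p}(1-p)^\alpha\to 0^-$ pins the sign on $(1,\infty)$ as strictly negative, so Conjecture~\ref{con4} holds automatically. In the \emph{escaping} case some $T_pf(x)\notin[p,1-p]$; by the $y\leftrightarrow 1-y$ symmetry of the multiset, bases appear both below $p$ and above $1-p$, which forces $c_p,c_{1-p}<0$ strictly, a sign pattern with exactly four sign changes, and $g_f^{\mathsf{sym}}(\alpha)\to+\infty$ at both ends.

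In the escaping case I would argue by contradiction: assume $g_f^{\mathsf{sym}}(\alpha^*)>0$ for some $\alpha^*\in(1,2)$ (the value $\alpha^*=2$ is ruled out by $g_f^{\mathsf{sym}}(2)\leq 0$). The hypothesis $(g_f^{\mathsf{sym}})'(1)\leq 0$ together with $g_f^{\mathsf{sym}}(1)=0$ yields a zero $\alpha_1\in(1,\alpha^*)$, the bound $g_f^{\mathsf{sym}}(2)\leq 0$ yields a zero $\alpha_2\in(\alpha^*,2]$, and the divergence $g_f^{\mathsf{sym}}(\alpha)\to+\infty$ forces a further zero $\alpha_3>\alpha_2$. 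Together with the zeros at $0$ and $1$, this produces five real zeros counted with multiplicity, violating the Laguerre bound of four. I expect the main obstacle to be handling the degenerate sub-cases where $(g_f^{\mathsf{sym}})'(1)=0$, $g_f^{\mathsf{sym}}(2)=0$, or $T_pf(x)\in\{p,1-p\}$ for some $x$ (the boundary between the two cases): in each of these, two would-be simple zeros coalesce, and one must carefully track multiplicities to verify that the total count still exceeds four.
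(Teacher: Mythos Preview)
Your proposal is correct and follows essentially the same route as the paper: both recast Conjecture~\ref{con2} as the sign of $(g_f^{\mathsf{sym}})'(1)$, identify the trivial zeros at $\alpha=0,1$, use Lemma~\ref{lem1} at $\alpha=2$, use the large-$\alpha$ behavior, and then invoke the four-zero Laguerre bound to force a contradiction. The only organizational differences are that the paper disposes of your ``contained'' case by the one-line convexity observation $(T_pf(x))^\alpha+(1-T_pf(x))^\alpha\le p^\alpha+(1-p)^\alpha$ for $T_pf(x)\in[p,1-p]$ and $\alpha\ge 1$, and runs the main argument as a contrapositive rather than a direct contradiction.

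The degenerate sub-cases you flag as the ``main obstacle'' are in fact non-issues. The case $g_f^{\mathsf{sym}}(2)=0$ simply cannot occur: Lemma~\ref{lem1} comes with an equality characterization (dictatorships only), so for any $f$ with $g_f^{\mathsf{sym}}\not\equiv 0$ one has $g_f^{\mathsf{sym}}(2)<0$ strictly. The case $(g_f^{\mathsf{sym}})'(1)=0$ is handled exactly as you suggest, by counting it as a multiplicity-two zero at $\alpha=1$; the paper makes this explicit. Finally, your escaping-case claim of ``exactly four sign changes'' tacitly assumes some base lies strictly in $(p,1-p)$; the paper justifies this by noting that if every $T_pf(x)$ were in $\{p,1-p\}\cup\big([0,p)\cup(1-p,1]\big)$ with at least one outside $[p,1-p]$, then termwise $(T_pf(x))^2+(1-T_pf(x))^2\ge p^2+(1-p)^2$ with a strict inequality somewhere, contradicting Lemma~\ref{lem1}. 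With these three remarks your argument is complete and coincides with the paper's.
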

\noindent Another relationship between these conjectures is that Conjecture \ref{con1} would imply Conjecture \ref{con2}, and similarly Conjecture \ref{con3} would imply Conjecture \ref{con4}. This is because if $f$ is balanced then $1-f$ is also balanced, and $T_p(1-f) = 1-T_pf$. Furthermore, $1-f_0$ is equivalent to a dictatorship function and will achieve the same value for any of the relevant functionals. We therefore have the following relationships between the various conjectures.

\vspace{.1in}
\begin{center}
\begin{tabular}{c c c}
Conjecture \ref{con1} & $\iff$ & Conjecture \ref{con3} \\
$\Downarrow$ & & $\Downarrow$ \\
Conjecture \ref{con2} & $\iff$ & Conjecture \ref{con4} 
\end{tabular}
\end{center}
\vspace{.1in}

It is worth pointing out that Conjecture \ref{con2} was proven in \cite{alex} in the high-noise case, i.e. for all $\frac{1}{2}-\delta<p<\frac{1}{2}$ where $\delta>0$ is an absolute constant. Therefore our result immediately implies that Conjecture \ref{con4} is also true in this high-noise case. It is also known that Conjecture \ref{con2} holds in the low-noise case (see \cite{courtade_kumar,muriel}) when $0<p<\delta_n$ and $\delta_n$ depends on $n$. In the same way Conjecture \ref{con4} must also hold in this dimensionally-dependent low-noise case.

\section{Proof of Theorems \ref{thm1} and \ref{thm2}}
In this section we develop the machinery needed to prove Theorems \ref{thm1} and \ref{thm2}, as well as finish their proofs. In both cases it is straightforward to show that the Li-M\'edard conjectures imply their corresponding Courtade-Kumar conjectures, but the other direction is more involved. We start with the easy direction.

\subsection{Conjecture \ref{con3} $\implies$ Conjecture \ref{con1}}\label{subsecA}
Taking the derivative of \eqref{eq:alphanorm} with respect to $\alpha$ and evaluating at $\alpha=1$ gives
\begin{align} \label{eq:deriv}
\frac{\partial}{\partial\alpha}N_\alpha(f)\bigg|_{\alpha=1} & = \sum_{x\in\{0,1\}^n} T_pf(x)\log T_pf(x) \; .
\end{align}
Note that \eqref{eq:deriv} matches the quantity from the inequality in Conjecture \ref{con1}. Writing out the derivative in \eqref{eq:deriv} as a difference quotient,
$$\frac{\partial}{\partial\alpha}N_\alpha(f)\bigg|_{\alpha=1} = \lim_{\epsilon \to 0} \frac{N_{1+\epsilon}(f) - N_{1}(f)}{\epsilon} \; .$$
If Conjecture \ref{con3} is true, then $N_{1+\epsilon}(f) \leq N_{1+\epsilon}(f_0)$ for any $0\leq\epsilon\leq1$. Furthermore, by the law of total probability,
\begin{align*}
N_1(f) = & \sum_{x\in\{0,1\}^n}\mathbb{P}(f(Y)=1|X=x) \\
& = 2^n \mathbb{P}(f(Y)=1) \\
& = 2^{n-1} \\
& = N_1(f_0) \; .
\end{align*}
Therefore
$$\frac{\partial}{\partial\alpha}N_\alpha(f)\bigg|_{\alpha=1} \leq \frac{\partial}{\partial\alpha}N_\alpha(f_0)\bigg|_{\alpha=1}$$
and \eqref{eq:deriv} implies Conjecture \ref{con1} must be true.
\subsection{Conjecture \ref{con4} $\implies$ Conjecture \ref{con2}}
This implication is very similar to that of Section \ref{subsecA} above. Differentiating \eqref{eq:alphanorm_sym} with respect to $\alpha$ and evaluating at $\alpha=1$ gives
\begin{align}\label{eq:deriv_sym}
\frac{\partial}{\partial\alpha}N_\alpha^\mathsf{sym}(f)\bigg|_{\alpha=1} & = \sum_{x\in\{0,1\}^n} T_pf(x)\log T_pf(x) \nonumber \\
& \quad \quad + (1-T_pf(x))\log (1-T_pf(x)) \nonumber \\
& = - \sum_{x\in\{0,1\}^n} h(\mathbb{P}(f(Y)=1|X=x)) \; .
\end{align}
Conjecture \ref{con4} implies that $N_{1+\epsilon}^\mathsf{sym}(f) \leq N_{1+\epsilon}^\mathsf{sym}(f_0)$ for $0\leq\epsilon\leq1$ and $N_1^\mathsf{sym}(f) = 2^n = N_1^\mathsf{sym}(f_0)$. Thus
$$\frac{\partial}{\partial\alpha}N^\mathsf{sym}_\alpha(f)\bigg|_{\alpha=1} \leq \frac{\partial}{\partial\alpha}N^\mathsf{sym}_\alpha(f_0)\bigg|_{\alpha=1}$$
and Conjecture \ref{con2} holds.

\subsection{Conjecture \ref{con2} $\implies$ Conjecture \ref{con4}} \label{sec3A}

We approach this implication by contrapositive, and show that if there exists an $1\leq \alpha \leq 2$ such that $N_\alpha^\mathsf{sym}(f) > N_\alpha^\mathsf{sym}(f_0)$, then $I(f(Y);X)> 1-h(p)$. We will need the following two lemmas.

We say that a function $g:\mathbb{R} \to \mathbb{R}$ has a zero at $\alpha$ if $g(\alpha) = 0$, and that it has an $m$th order zero (or a zero with multiplicity $m$) at $\alpha$ if both $g(\alpha)=0$ and the derivatives $g^{(j)}(\alpha) = 0$ for all $j=1,\ldots,m-1$.
\begin{lem}[Laguerre 1883 \cite{laguerre}] \label{lem2}
Suppose $$g(\alpha) = \sum_{i=1}^N A_ie^{c_i \alpha}$$ with $$c_1 < c_2 < \ldots < c_N$$ and $A_i\neq 0$. Then $g$ has at most as many zeros (counting multiplicities) as the number of sign changes in the sequence $A_1,A_2,\ldots,A_N$. 
\end{lem}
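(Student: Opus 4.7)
The plan is to proceed by induction on the number of exponential terms $N$. The base case $N=1$ is immediate: $g(\alpha) = A_1 e^{c_1\alpha}$ has no zeros (since $A_1 \neq 0$) and the sequence $(A_1)$ has zero sign changes. For the inductive step, I would first normalize by multiplying through by the strictly positive factor $e^{-c_1\alpha}$, which preserves the set of zeros along with their multiplicities, giving
$$\tilde g(\alpha) = A_1 + \sum_{i=2}^N A_i e^{(c_i - c_1)\alpha}.$$
Differentiating yields
$$\tilde g'(\alpha) = \sum_{i=2}^N A_i (c_i - c_1) e^{(c_i - c_1)\alpha}.$$
Since $c_i - c_1 > 0$ for $i \geq 2$, the coefficients $A_i(c_i - c_1)$ carry the same signs as $A_2,\ldots,A_N$, so $\tilde g'$ falls under the inductive hypothesis and has at most $S_2$ zeros, where $S_2$ denotes the number of sign changes in $A_2,\ldots,A_N$.

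Next I would invoke Rolle's theorem in its multiplicity-aware form. If $\tilde g$ has $Z$ zeros counted with multiplicity, occurring at $k$ distinct points with multiplicities $m_1,\ldots,m_k$, then $\tilde g'$ inherits a zero of multiplicity at least $m_i - 1$ at each of these points and picks up at least one further zero in each of the $k-1$ gaps between consecutive zeros of $\tilde g$. Summing, $\tilde g'$ has at least $\sum_i (m_i - 1) + (k-1) = Z - 1$ zeros. Combined with the inductive bound this gives $Z \leq S_2 + 1$. When $\mathrm{sgn}(A_1) \neq \mathrm{sgn}(A_2)$ the first pair contributes an additional sign change, so $S = S_2 + 1$ and we are done.

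The main obstacle is the remaining case $\mathrm{sgn}(A_1) = \mathrm{sgn}(A_2)$, where $S = S_2$ and the Rolle count alone only delivers $Z \leq S + 1$. To sharpen the bound I would manufacture one extra zero of $\tilde g'$ using the asymptotic behavior at $-\infty$. As $\alpha \to -\infty$ one has $\tilde g(\alpha) \to A_1$ and $\tilde g'(\alpha) \sim A_2(c_2 - c_1) e^{(c_2 - c_1)\alpha}$, so $\tilde g$ and $\tilde g'$ both carry the sign of $A_1$ in a neighborhood of $-\infty$, making the product $\tilde g \cdot \tilde g'$ positive there. Let $\alpha_*$ denote the smallest zero of $\tilde g$ (if none exists then $Z = 0 \leq S$ and there is nothing to show). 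Then $\tilde g$ is nonzero with constant sign $\mathrm{sgn}(A_1)$ on $(-\infty,\alpha_*)$ while approaching $0$ at $\alpha_*$, so somewhere in this interval $\tilde g$ is moving toward zero and hence $\tilde g \cdot \tilde g' < 0$. By continuity $\tilde g'$ therefore vanishes at some point of $(-\infty, \alpha_*)$. This zero is distinct from all the Rolle zeros already tallied (which lie in $[\alpha_*, \infty)$), upgrading the count to at least $Z$ zeros of $\tilde g'$, so the inductive hypothesis yields $Z \leq S_2 = S$ as required.
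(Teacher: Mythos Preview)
Your argument is correct. Both your proof and the paper's proceed by induction, normalize by an exponential factor, differentiate, and control the zero count via Rolle's theorem with multiplicities; but the organization is genuinely different. The paper inducts on the number $k$ of sign changes and, crucially, chooses the normalizing exponent $b$ to sit strictly between $c_l$ and $c_{l+1}$ at a location where a sign change occurs. Differentiation then multiplies the coefficients by $(c_i-b)$, which is negative for $i\le l$ and positive for $i\ge l+1$; this flips the signs of the first $l$ coefficients and hence deletes exactly the sign change at position $l$ while preserving all others. One step of Rolle then finishes the induction with no case analysis.

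You instead induct on $N$, always normalize by $e^{-c_1\alpha}$, and differentiate to remove the first term. This forces you to split into cases according to whether $\mathrm{sgn}(A_1)=\mathrm{sgn}(A_2)$, and in the equal-sign case you recover the lost unit by locating an additional zero of $\tilde g'$ to the left of the smallest zero of $\tilde g$ via asymptotics at $-\infty$ and the mean value theorem. This works (your ``moving toward zero'' remark can be made precise by applying the mean value theorem on $[\gamma,\alpha_*]$ for any $\gamma<\alpha_*$), but it is a little more laborious. The paper's choice of $b$ is the cleaner device: it guarantees a sign change is consumed at every step and so sidesteps the asymptotic argument entirely.
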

\begin{proof} The proof is by induction on the number of sign changes. For the base case note that if each value in the sequence $A_1,\ldots,A_N$ has the same sign, then either $g(\alpha)>0$ or $g(\alpha)<0$ for all $\alpha\in\mathbb{R}$.

For the inductive step, suppose that the sequence $A_1,\ldots,A_N$ changes sign $k$ times as we move from $A_1$ to $A_N$. Suppose that one of the sign changes occurs between indices $l$ and $l+1$, i.e., that $\mathsf{sign}(A_l)\neq\mathsf{sign}(A_{l+1})$. Letting $c_l < b < c_{l+1}$, we can rewrite $g(\alpha)$ as
\begin{align} \label{eq:factor}
g(\alpha) = e^{b\alpha}\sum_{i=1}^N A_ie^{(c_i-b)\alpha} \; .
\end{align}
We isolate the second factor from \eqref{eq:factor} above and define
\begin{align} \label{eq:factor2}
h(\alpha) =\sum_{i=1}^N A_ie^{(c_i-b)\alpha} \; .
\end{align}
Since $e^{b\alpha}>0$, the general Leibniz rule for derivatives of products implies that both $g$ and $h$ will have zeros in the same locations and with the same multiplicities. We will therefore focus on counting the zeros of $h(\alpha)$. Suppose that $h$ has zeros at $\alpha_1,\ldots,\alpha_r$ with corresponding multiplicities $m_1,\ldots,m_r$ and let $M = \sum_{j=1}^r m_r$.

The derivative $h'(\alpha)$ will have $\sum_{j=1}^r \max\{(m_j-1),0\}$ zeros at the points $\alpha_1,\ldots,\alpha_r$, and by Rolle's theorem \cite{mattuck} at least $r-1$ zeros between these points. So in total $h'(\alpha)$ will have at least $M-1$ zeros. Differentiating \eqref{eq:factor2} with respect to $\alpha$,
\begin{align} \label{eq:factor_deriv}
\frac{\partial}{\partial\alpha}\sum_{i=1}^N A_ie^{(c_i-b)\alpha} = \sum_{i=1}^N (c_i-b)A_ie^{(c_i-b)\alpha} \; .
\end{align}
Display \eqref{eq:factor_deriv} fits exactly the form of the function required by the Lemma -- it is a sum of ordered exponentials with $k-1$ sign changes in the coefficients. Hence, by the inductive assumption, the function $h'(\alpha)$ can have at most $k-1$ zeros. Putting this all together we have $M-1 \leq k-1$ and $M\leq k$ as desired.
\end{proof}

\begin{lem} \label{lem1}
For any balanced $f$, $N_2(f) \leq N_2(f_0)$ and $N^\mathsf{sym}_2(f) \leq N^\mathsf{sym}_2(f_0)$ with equality only if $f$ is a dictatorship function.
\end{lem}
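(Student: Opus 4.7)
My approach is to pass to the Fourier--Walsh expansion of $f$ on the hypercube and apply Parseval's identity to $T_pf$. Write
\[ f(y) = \sum_{S\subseteq[n]} \hat f(S)\chi_S(y), \quad \chi_S(y)=\prod_{i\in S}(-1)^{y_i}, \]
with $\hat f(S)=2^{-n}\sum_y f(y)\chi_S(y)$. The noise operator $T_p$ is diagonal in this basis with eigenvalues $(1-2p)^{|S|}$, so $T_pf(x)=\sum_S \hat f(S)(1-2p)^{|S|}\chi_S(x)$. Parseval then gives
\[ N_2(f) = 2^n\sum_S \hat f(S)^2 (1-2p)^{2|S|}. \]

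\textbf{Bounding $N_2(f)$.} The balanced hypothesis implies $\hat f(\emptyset)=1/2$, and since $f$ is Boolean ($f^2=f$) a second application of Parseval yields $\sum_S \hat f(S)^2 = 2^{-n}\sum_y f(y) = 1/2$, so $\sum_{S\neq\emptyset}\hat f(S)^2 = 1/4$. Since $0<1-2p<1$, the factor $(1-2p)^{2|S|}$ is strictly decreasing in $|S|$, hence bounded above by $(1-2p)^2$ on all nonempty $S$. Substituting gives
\[ N_2(f) \leq 2^n\Bigl[\tfrac14 + (1-2p)^2\cdot\tfrac14\Bigr] = 2^{n-2}\bigl(1+(1-2p)^2\bigr). \]
A direct computation of the Fourier coefficients of $f_0(y)=y_i$ (namely $\hat{f_0}(\emptyset)=1/2$, $\hat{f_0}(\{i\})=-1/2$, and all other coefficients zero) shows this upper bound is exactly $N_2(f_0)$, yielding the inequality.

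\textbf{Reducing the symmetrized statement.} Expanding $(1-T_pf(x))^2$ and using $N_1(f)=2^n\mathbb{P}(f(Y)=1)=2^{n-1}$ for any balanced $f$ (as in Section~\ref{subsecA}), one obtains
\[ N_2^{\mathsf{sym}}(f) = 2N_2(f) + 2^n - 2N_1(f) = 2N_2(f). \]
Since this identity also applies to $f_0$, the inequality $N_2^{\mathsf{sym}}(f)\leq N_2^{\mathsf{sym}}(f_0)$ follows immediately from the unsymmetrized one.

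\textbf{Equality case.} The only step with any subtlety is the characterization of equality. Equality in the Parseval bound forces $\hat f(S)=0$ for every $S$ with $|S|\geq 2$, so $f$ is an affine Boolean function $\hat f(\emptyset)+\sum_i \hat f(\{i\})\chi_{\{i\}}$. The only $\{0,1\}$-valued affine functions on the cube are constants, single coordinates $y_i$, and their complements $1-y_i$; among these, the balanced ones are exactly the dictatorships and anti-dictatorships, which for the purposes of this paper are treated as equivalent dictatorship functions. This is the one place I would need to be careful, since ruling out higher-order Fourier mass using only the Boolean/balanced constraints is where the argument could conceivably fail for a subtler notion of ``dictatorship''; but here the affine-Boolean classification makes it clean.
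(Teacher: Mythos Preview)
Your proof is correct and follows essentially the same route as the paper's: Fourier-expand $f$, diagonalize $T_p$, apply Parseval, and use the balanced/Boolean constraints to fix $\hat f(\emptyset)^2$ and the total energy, so that $(1-2p)^{2|S|}\le(1-2p)^2$ on nonempty $S$ forces the bound with equality only when all non-DC mass sits at degree one. Your explicit identity $N_2^{\mathsf{sym}}(f)=2N_2(f)$ and your spelling out of the affine classification in the equality case are slightly more detailed than the paper's appendix, but the argument is the same.
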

\noindent Lemma \ref{lem1} follows immediately from taking the Fourier transform of $T_pf$ and using the Parseval/Rayleigh/Plancherel Theorem. See also \cite{chandar,muriel}. A proof is included in the appendix for completeness

In order to apply Lemma \ref{lem2}, we will consider the function $g^\mathsf{sym}_f(\alpha)$ from \eqref{eq:g_sym}. This function takes the required form since we can write it as
\begin{align}
g^\mathsf{sym}_f(\alpha) & = \sum_{x\in\{0,1\}^n} e^{\alpha\log T_pf(x)} + e^{\alpha\log (1-T_pf(x))} \nonumber \\
& \quad\quad\quad\quad - e^{\alpha\log T_pf_0(x)} - e^{\alpha\log (1-T_pf_0(x))} \; .
\end{align}

\noindent It is clear that $T_pf_0(x) = p$ for $2^{n-1}$ different values of $x$, and $T_pf_0(x) = 1-p$ for the remaining $2^{n-1}$ values of $x$. We will assume without loss of generality that $T_pf(x_0)<p$ or $T_pf(x_0)>1-p$ for some $x_0$ because if this is not the case we would necessarily have
$$(T_pf(x))^\alpha + (1-T_pf(x))^\alpha \leq (T_pf_0(x))^\alpha + (1-T_pf_0(x))^\alpha$$
for all $x$ and $\alpha\geq1$ and there would be nothing to prove. Furthermore, we must have at least one $x$ such that $p<T_pf(x)<1-p$ otherwise this would contradict Lemma \ref{lem1}. The ordered sequence of the $c_i$ (which take the form $\log(T_pf(x))$, $\log(1 -T_pf(x))$, $\log(T_pf_0(x))$, and $\log(1-T_pf_0(x))$) will therefore give rise to a sequence of signs that looks like $+ \; - \; + \; - \; +$  in the coefficients $A_i$. There are therefore exactly four sign changes in the sequence $A_i$, and applying Lemma \ref{lem2}, $g^\mathsf{sym}_f(\alpha)$ can have at most four zeros where $N^\mathsf{sym}_\alpha(f_0) = N^\mathsf{sym}_\alpha(f)$.

Two of the zeros of $g^\mathsf{sym}_f(\alpha)$ must occur at $\alpha = 0$ and $\alpha = 1$. At $\alpha=2$, Lemma \ref{lem1} implies that $g^\mathsf{sym}_f(2) < 0$. However, as $\alpha$ gets large, $N^\mathsf{sym}_\alpha(f_0) < N^\mathsf{sym}_\alpha(f)$ and $g^\mathsf{sym}_f(\alpha)>0$ since $T_p(x_0)>1-p$ or $1-T_p(x_0) > 1-p$. So by the continuity of $N^\mathsf{sym}_\alpha(f)$ with respect to $\alpha$ we must have another zero of $g^\mathsf{sym}_f(\alpha)$ with $\alpha>2$. In a similar way, $N^\mathsf{sym}_\alpha(f_0) < N^\mathsf{sym}_\alpha(f)$ as $\alpha$ approaches $-\infty$.

With all of these restrictions on $g^\mathsf{sym}_f(\alpha)$ in mind, we return to showing that Conjecture \ref{con2} implies Conjecture \ref{con4} by contrapositive. If there exists an $1\leq \alpha \leq 2$ such that $N_\alpha^\mathsf{sym}(f) > N_\alpha^\mathsf{sym}(f_0)$, then the fourth and final zero of $g^\mathsf{sym}_f(\alpha)$ must occur for some $1<\alpha<2$ (since $g^\mathsf{sym}_f(2) < 0$ and the function is continuous). In this case, the function $g^\mathsf{sym}_f(\alpha)$ must take the form depicted in Figure \ref{fig1}, and in particular it must be positive for $1<\alpha<1+\epsilon$ for some $\epsilon>0$. By looking at the difference quotient,
$$\frac{\partial}{\partial\alpha}N^\mathsf{sym}_\alpha(f)\bigg|_{\alpha=1} \geq \frac{\partial}{\partial\alpha}N^\mathsf{sym}_\alpha(f_0)\bigg|_{\alpha=1} \; .$$
We can rule out the possibility that
$$\frac{\partial}{\partial\alpha}N^\mathsf{sym}_\alpha(f)\bigg|_{\alpha=1} = \frac{\partial}{\partial\alpha}N^\mathsf{sym}_\alpha(f_0)\bigg|_{\alpha=1}$$
because it would imply $g^\mathsf{sym}_f(\alpha)$ had a second-order zero at $\alpha=1$ which would contradict Lemma \ref{lem2}. Thus
$$\frac{\partial}{\partial\alpha}N^\mathsf{sym}_\alpha(f)\bigg|_{\alpha=1} > \frac{\partial}{\partial\alpha}N^\mathsf{sym}_\alpha(f_0)\bigg|_{\alpha=1}$$
and because of \eqref{eq:deriv_sym},
$$I(f(Y);X) > 1 - h(p) \; .$$

\begin{figure}
\centerline{\includegraphics[width=8.5cm]{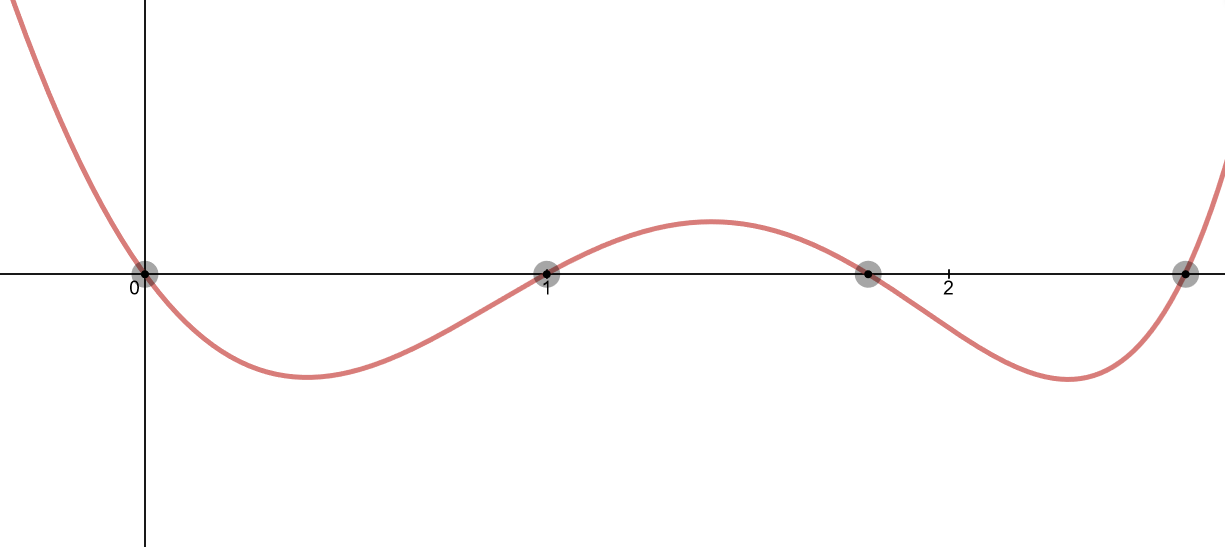}}
\caption{The $g^\mathsf{sym}_f(\alpha)$ curve as a function of $\alpha$ from the proof in Section \ref{sec3A}. The position of the four unique zeros relative to each other are marked, and it is clear where the curve must be  positive and negative. Note that this is not an actual $g^\mathsf{sym}_f(\alpha)$ curve corresponding to a specific Boolean function $f$ -- if that were true it would be a counterexample to the Courtade-Kumar conjecture, none of which are known.}
\label{fig1}
\end{figure}

\subsection{Conjecture \ref{con1} $\implies$ Conjecture \ref{con3}}
The proof of this implication is roughly the same as that of Section \ref{sec3A}. The main difference is that there may exist an $x_0$ with $T_pf(x_0)<p$ or $T_pf(x_0)>1-p$, and since we are not including the corresponding $1-T_pf(x_0)$ term, there might not be any $x_1$ with $T_pf(x_1)>1-p$ or $T_pf(x_1)<p$, respectively. This opens up the possibility that there could be only three sign changes in the coefficients of $g_f(\alpha)$. If there are four sign changes, then the result follows just as in Section \ref{sec3A}. Let us therefore assume that there are three sign changes.

There are two trivial zeros of $g_f(\alpha)$ at $\alpha=0$ and $\alpha=1$, and if there exists an $1\leq \alpha \leq 2$ such that $N_\alpha(f) > N_\alpha(f_0)$, then by Lemma \ref{lem1} and the continuity of $g_f(\alpha)$ the third zero must occur for some $1<\alpha<2$. In this case, the curve $g_f(\alpha)$ must be positive between $\alpha=1$ and the third zero, and 
$$\frac{\partial}{\partial\alpha}N_\alpha(f)\bigg|_{\alpha=1} > \frac{\partial}{\partial\alpha}N_\alpha(f_0)\bigg|_{\alpha=1}$$
just like in Section \ref{sec3A}. Therefore,
$$\sum_{x\in\{0,1\}^n}  T_pf(x)\log T_pf(x) > \sum_{x\in\{0,1\}^n}  T_pf_0(x)\log T_pf_0(x)$$
and the proof by contrapositive is complete.

\section{An Example $g^\mathsf{sym}_f(\alpha)$}
In order to see what these $g^\mathsf{sym}_f(\alpha)$ curves can look like, consider the following example. Let $n=3$ and
\begin{align*}
f(y) = \begin{cases} 1 & \text{, if } y_1+y_2+y_3 \geq 2 \\ 0 & \text{, otherwise} \end{cases}
\end{align*}
i.e., the majority function. For this $f$ we have
\begin{align*}
g^\mathsf{sym}_f(\alpha) & = 2\left(\left(1-p\right)^{3}+3p\left(1-p\right)^{2}\right)^{\alpha} \\
& \quad + 6\left(\left(1-p\right)^{3}+\left(1-p\right)^{2}p+2\left(1-p\right)p^{2}\right)^{\alpha} \\
& \quad + 6\left(2\left(1-p\right)^{2}p+\left(1-p\right)p^{2}+p^{3}\right)^{\alpha} \\
& \quad + 2\left(p^{3}+3\left(1-p\right)p^{2}\right)^{\alpha} \\
& \quad - 8p^{\alpha}-8\left(1-p\right)^{\alpha} \; .
\end{align*}
The theory from Lemma \ref{lem2} proves that this function will have at most four zeros. Trivial zeros occur at $\alpha=0$ and $\alpha=1$, and there must be an additional one for $\alpha>2$. In Figure \ref{fig2} we show this curve for three different values of $p$, showing three different possible behaviors.

\begin{figure}
\centerline{\includegraphics[width=8.5cm]{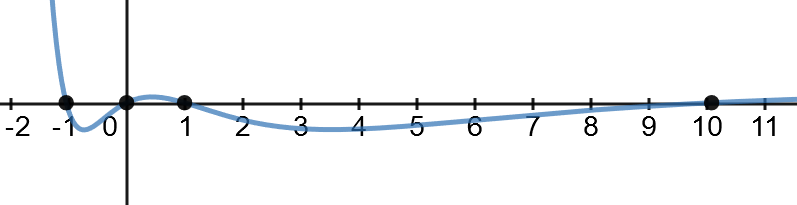}}
\vspace{.1in}
\centerline{\includegraphics[width=8.5cm]{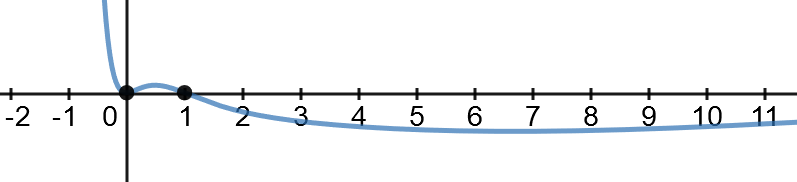}}
\vspace{.1in}
\centerline{\includegraphics[width=8.5cm]{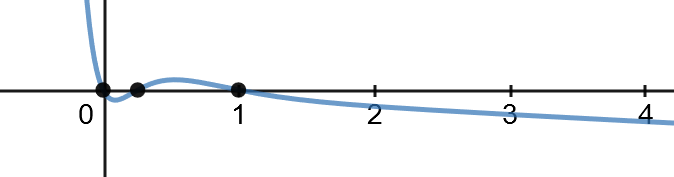}}
\caption{The $g^\mathsf{sym}_f(\alpha)$ curve as a function of $\alpha$ for the $n=3$ majority function with $p=.21$ (top), $p=.068$ (middle), and $p=.017$ (bottom). There is a zero at large $\alpha$ in the middle and bottom curves that is not visible in these plots. The fourth zero for the top curve occurs at $\alpha<0$. In the middle curve, the ``fourth zero'' is accounted for by a second order zero at $\alpha=0$. In the bottom curve, the fourth zero occurs at $0<\alpha<1$. None of the curves has a zero at $1<\alpha<2$, as this would be equivalent to $f$ being a counterexample to the Courtade-Kumar conjecture.}
\label{fig2}
\end{figure}

\section{Appendix}

In this appendix, we briefly introduce the Fourier analysis of Boolean functions in order to prove Lemma \ref{lem1}. For a full treatment we recommend \cite{boolean}. A Boolean function $f:\{0,1\}^n \to \{0,1\}$ has Fourier expansion
$$f(y) = \sum_{v\in\{0,1\}^n} \hat{f}(v)W_v(y)$$
where the set of Boolean functions $\{W_v\}_{v\in\{0,1\}^n}$ defined by
$$W_v(y) = \frac{1}{\sqrt{2^n}}(-1)^{\sum_{i=1}^n v_iy_i}$$
forms an orthonormal basis.

One way to think about the probability $\mathbb{P}(f(Y)=1|X=x)$, and the origin of the notation from \eqref{eq:noise_op}, is that it is the function of $x$ obtained via the noise operator $T_p$ applied to $f$. The function $T_pf$ can be expressed as
\begin{align*}
T_pf(x) & = \sum_{y\in\{0,1\}^n} p^{d(x,y)}(1-p)^{n-d(x,y)}f(y)
\end{align*}
where $d(\cdot,\cdot)$ is the Hamming distance, so the operator $T_p$ can also be thought of as convolution with the kernel $\varphi_p(x) = p^{d(0,x)}(1-p)^{n-d(0,x)}$. Thus using the Convolution Theorem,
\begin{align*}
T_pf(x) & = \sum_{v\in\{0,1\}^n} (1-2p)^{d(0,v)}\hat{f}(v)W_v(x)
\end{align*}
where $\lambda_v = (1-2p)^{d(0,v)}$ for $v\in\{0,1\}^n$ are the eigenvalues of the $T_p$ operator.
Using Parseval's theorem,
\begin{align} \label{eq:parseval}
N_2(f) & = \sum_{x\in\{0,1\}^n} (T_pf(x))^2 \nonumber\\
& = \sum_{v\in\{0,1\}^n} (1-2p)^{2d(0,v)}(\hat f(v))^2 \nonumber\\
& = \frac{2^n}{4} + \sum_{v\neq0}  (1-2p)^{2d(0,v)}(\hat f(v))^2
\end{align}
where the last equality \eqref{eq:parseval} follows from $f$ being balanced. Since
$$\sum_{y\in\{0,1\}} (f(y))^2 = \frac{2^n}{2} = \sum_{v\in\{0,1\}} (\hat f(v))^2$$
for all balanced $f$, it is clear that in order to maximize \eqref{eq:parseval} the remaining (non-DC-component) energy in $f$ should all be concentrated in the Fourier coefficients with $d(0,v) = 1$ , i.e., the ``weight'' one Fourier coefficients. The only balanced Boolean functions with all of their energy concentrated in the weight zero and one Fourier coefficients are the dictatorship functions, so we have
$$N_2(f) \leq N_2(f_0)$$
with equality only when $f$ is a dictatorship function. Since $f$ and $1-f$ are both dictatorship functions that maximize this quantity, we similarly have
$$N^\mathsf{sym}_2(f) \leq N^\mathsf{sym}_2(f_0) \; .$$

\bibliographystyle{ieeetran}
\bibliography{main.bib}

\begin{thebibliography}{1}
\providecommand{\url}[1]{#1}
\csname url@samestyle\endcsname
\providecommand{\newblock}{\relax}
\providecommand{\bibinfo}[2]{#2}
\providecommand{\BIBentrySTDinterwordspacing}{\spaceskip=0pt\relax}
\providecommand{\BIBentryALTinterwordstretchfactor}{4}
\providecommand{\BIBentryALTinterwordspacing}{\spaceskip=\fontdimen2\font plus
\BIBentryALTinterwordstretchfactor\fontdimen3\font minus
  \fontdimen4\font\relax}
\providecommand{\BIBforeignlanguage}[2]{{%
\expandafter\ifx\csname l@#1\endcsname\relax
\typeout{** WARNING: IEEEtran.bst: No hyphenation pattern has been}%
\typeout{** loaded for the language `#1'. Using the pattern for}%
\typeout{** the default language instead.}%
\else
\language=\csname l@#1\endcsname
\fi
#2}}
\providecommand{\BIBdecl}{\relax}
\BIBdecl

\bibitem{courtade_kumar}
T.~A. Courtade and G.~R. Kumar, ``Which boolean function maximize mutual
  information on noisy inputs?'' \emph{IEEE Transactions on Information
  Theory}, vol.~60, no.~8, 2014.

\bibitem{chandar}
V.~Chandar and A.~Tchamkerten, ``Most informative quantization functions,''
  \emph{Proc. ITA Workshop, San Diego, CA, USA}, 2014.

\bibitem{or}
O.~Ordentlich, O.~Shayevitz, and O.~Weinstein, ``An improved upper bound for
  the most informative boolean function conjecture,'' \emph{Proceedings of the
  International Symposium on Information Theory (ISIT)}, 2016.

\bibitem{alex}
A.~Samorodnitsky, ``On the entropy of a noisy function,'' \emph{IEEE
  Transactions on Information Theory}, vol.~62, no.~10, p. 5446–5464, 2016.

\bibitem{muriel}
\BIBentryALTinterwordspacing
J.~Li and M.~M\'edard, ``Boolean functions: noise stability, non-interactive
  correlation distillation, and mutual information.'' [Online]. Available:
  \url{https://arxiv.org/pdf/1801.04462.pdf}
\BIBentrySTDinterwordspacing

\bibitem{hel_conj}
\BIBentryALTinterwordspacing
V.~Anantharam, A.~Bogdanov, A.~Chakrabarti, T.~Jayram, and C.~Nair, ``A
  conjecture regarding optimality of the dictator function under hellinger
  distance.'' [Online]. Available:
  \url{http://chandra.ie.cuhk.edu.hk/pub/papers/HC/hel-conj.pdf}
\BIBentrySTDinterwordspacing

\bibitem{laguerre}
\BIBentryALTinterwordspacing
E.~N. Laguerre, ``Sur la théorie des équations numériques,'' \emph{Journal
  de Math\'ematiques pures et appliqu\'ees}, 1883. [Online]. Available:
  \url{http://sepwww.stanford.edu/oldsep/stew/laguerre.pdf}
\BIBentrySTDinterwordspacing

\bibitem{mattuck}
A.~Mattuck, \emph{Introduction to Analysis}.\hskip 1em plus 0.5em minus
  0.4em\relax Prentice Hall, 1999.

\bibitem{boolean}
R.~O'Donnel, \emph{Analysis of Boolean Functions}.\hskip 1em plus 0.5em minus
  0.4em\relax Cambridge University Press, 2014.

\end{thebibliography}

\section*{Acknowledgements}
This work was supported in part by NSF award CCF-1704624 and by the Center for Science of Information (CSoI), an NSF Science and Technology Center, under grant agreement CCF-0939370.

\end{document}